\documentclass[12pt,a4paper]{article}
\usepackage[latin1,utf8]{inputenc}
\usepackage{amsmath}
\usepackage{amsthm}
\usepackage{amsfonts}
\usepackage{amssymb}
\usepackage{thmtools}
\usepackage{thm-restate}
\usepackage{hyperref}
\usepackage{cleveref}
\usepackage{graphicx}
\usepackage{nicefrac}
\usepackage{fullpage}
\usepackage{bm}
\usepackage{xcolor}
\usepackage[linesnumbered,ruled]{algorithm2e}
\usepackage{tikz}
\usepackage{enumitem}

\newtheorem{thm}{Theorem}[section]
\newtheorem{prop}[thm]{Proposition}

\newtheorem{remark}[thm]{Remark}

\newtheorem{defi}[thm]{Definition}

\newcommand{\Fq}{\mathbb{F}_q}

\newcommand{\cC}{\mathcal{C}}

\newcommand{\bX}{{\bf X}}

\newcommand{\bfv}{{\bf v}}
\newcommand{\bfu}{{\bf u}}

\newcommand\blfootnote[1]{%
  \begingroup
  \renewcommand\thefootnote{}\footnote{#1}%
  \addtocounter{footnote}{-1}%
  \endgroup
}

\newcommand{\F}{\mathbb{F}}
\newcommand{\cR}{\mathcal{R}}

\begin{document}
	
	\title{Optimal Two-Dimensional Reed--Solomon Codes Correcting Insertions and Deletions}
\author{
	Roni Con\thanks{Blavatnik School of Computer Science, Tel Aviv University, Tel Aviv, Israel. Email: roni.con93@gmail.com} \and 
	Amir Shpilka \thanks{Blavatnik School of Computer Science, Tel Aviv University, Tel Aviv, Israel. Email: shpilka@tauex.tau.ac.il. The research leading to these results has received funding from the Israel Science Foundation (grant number 514/20) and from the Len Blavatnik and the Blavatnik Family Foundation.} \and
	Itzhak Tamo \thanks{Department of EE-Systems, Tel Aviv University, Tel Aviv, Israel. Email: zactamo@gmail.com.}
}
\date{}
\maketitle
\begin{abstract}

        Constructing Reed--Solomon (RS) codes that can correct insertions and deletions (insdel errors) has been considered in numerous recent works. Our focus in this paper is on the special case of two-dimensional RS-codes that can correct from $n-3$ insdel errors, the maximal possible number of insdel errors a two-dimensional linear code can recover from.
        It is known (by setting $k=2$ in the lower bound \cite[Proposition 37]{con2023reed}) that an $[n,2]_q$ RS-code that can correct from $n-3$ insdel errors satisfies that  $q=\Omega(n^3)$.
        On the other hand, there are several known constructions of $[n,2]_q$ RS-codes that can correct from $n-3$ insdel errors, where the smallest field size is $q=O(n^4)$. 
        In this short paper, we construct $[n,2]_q$ Reed--Solomon codes that can correct $n-3$ insdel errors with $q=O(n^3)$, thereby resolving the minimum field size needed for such codes. \blfootnote{The work of Itzhak Tamo and Roni Con was supported by the European Research Council (ERC grant number 852953).}

\end{abstract}
\thispagestyle{empty}
\newpage

\section{Introduction}
    

    Constructing error-correcting codes against \emph{synchronization} errors has received a lot of attention recently.   
    The most common model for studying synchronization errors is the insertion-deletion model (insdel for short): an insertion error is when a new symbol is inserted between two symbols of the transmitted word. A deletion is when a symbol is removed from the transmitted word. 
    These are errors that affect the length of the received word. 
    For example, over the binary alphabet, when $100110$ is transmitted, we may receive the word $11011000$, which is obtained from three insertions ($1$ at the beginning and two $0$s at the end) and one deletion (one of the $0$'s at the beginning of the transmitted word). Generally, insdel errors cause the sending and receiving parties to become ``out of sync'' which makes them inherently more difficult to deal with.
	
	Insdel errors appear in diverse settings such as optical recording, semiconductor devices, integrated circuits, and synchronous digital communication networks. 
    Due to the importance of the model and our lack of understanding of some basic problems concerning it, constructing efficient codes that can handle insdel errors is the topic of many recent works  \cite{haeupler2017synchronization,brakensiek2017efficient,guruswami2017deletion,schoeny2017codes,cheng2018deterministic,haeupler2019optimal,cheng2020efficient,guruswami2020optimally,guruswami2021explicit,liu2023explicit} (see also the excellent survey \cite{haeupler2021synchronization}). 
    Further, codes for correcting insdel errors attract a lot of attention due to their possible application in correcting errors in DNA-based storage systems \cite{lenz2019coding,heckel2019characterization,shomorony2022information}. This recent increased interest was paved by substantial progress in synthesis and sequencing technologies. The main advantages of DNA-based storage over classical storage technologies are very high data densities and long-term reliability without an electrical supply.

    Reed-Solomon codes are the most widely used family of codes in theory and practice. Their extensive use can be credited to their simplicity, as well as their efficient encoding and decoding algorithms.    
    Some of their notable applications include QR codes \cite{soon2008qr}, secret sharing schemes \cite{mceliece1981sharing}, and distributed storage systems \cite{tamo2014family,guruswami2016repairing}.
	As such, it is an important problem to understand whether they can also decode from insdel errors. This problem received a lot of attention recently \cite{safavi2002traitor,wang2004deletion,tonien2007construction,duc2019explicit,liu20212,chen2021improved,liu2022bounds,con2023reed}. We note that the best we can hope for when designing a \emph{linear} code that can decode insdel errors is to achieve the \emph{half-Singleton bound}. Specifically, an $[n,k]$ linear code can correct at most $n-2k+1$ insdel errors \cite{cheng2020efficient} (see also \cite{chen2022coordinate,ji2023strict}). In \cite{con2023reed}, the authors showed that there are $[n,k]$ RS codes that achieve the half-Singleton bound where the field size is $n^{O(k)}$.
    This poses the following problem: What is the minimal field size for which there exists an $[n, k]_q$ RS code that can correct $n-2k+1$ insdel errors, the maximum number of insdel errors that can be corrected by a linear code?
    
    In this paper, we provide an answer to this question for RS codes of dimension $k=2$ and show that the minimal field size is $q=\Theta(n^3)$. 
	
\subsection{Basic definitions and notation}
	
	For an integer $k$, we denote $[k]=\{1,2,\ldots,k\}$. 
	Throughout this paper, $\log(x)$ refers to the base-$2$ logarithm. For a prime power $q$, we denote with $\F_q$ the field of size $q$.
	
    	
	A linear code over a field $\F$ is a linear subspace $\cC\subseteq \F^n$. The rate of a linear code  $\cC$ of block length $n$ is $\cR=\dim(\cC)/n$.  Every linear code of dimension $k$ can be described as the image of a linear map, which, abusing notation, we also denote with $\cC$, i.e., $\cC : \F^k \rightarrow \F^n$. 
    When $\cC\subseteq \F_q^n$ has dimension $k$  we say that it is an $[n,k]_q$ code (or an $[n,k]$ code defined over $\Fq$). The minimal distance of a code $\cC$ with respect to a metric $d(\cdot,\cdot)$ is defined as $\min_{\bfv, \bfu \in \cC,\bfu \neq \bfv}{d(\bfv,\bfu)}$. 
	Naturally, we would like the rate to be as large as possible, but there is an inherent tension between the rate of the code and the minimal distance (or the number of errors that a code can decode from).
	In this work, we focus on codes against insertions and deletions. 
	\begin{defi}
		Let $s$ be a string over the alphabet $\Sigma$. The operation in which we remove a symbol from $s$ is called a \emph{deletion} and the operation in which we place a new symbol from $\Sigma$ between two consecutive symbols in $s$, in the beginning, or at the end of $s$, is called an \emph{insertion}. 
		
	\end{defi}

	We next define Reed-Solomon codes (RS-codes from now on). 
	
	\begin{defi}[Reed-Solomon codes]
		Let $\alpha_1, \alpha_2, \ldots, \alpha_n \in\F_q$ be distinct points in a finite field $\mathbb{F}_q$ of order  $q\geq n$. For $k\leq n$ the $[n,k]_q$ RS-code
		defined by the evaluation vector $\bm{\alpha} = ( \alpha_1, \ldots, \alpha_n )$ is the set of codewords 
		\[
		\left \lbrace c_f = \left( f(\alpha_1), \ldots, f(\alpha_n) \right) \mid f\in \mathbb{F}_q[x],\deg f < k \right \rbrace \;.
		\]
	\end{defi}
	
	Namely, a codeword of an $[n,k]_q$ RS-code is the evaluation vector of some polynomial of degree less than $k$ at $n$ predetermined distinct points. 
	It is well known (and easy to see) that the rate of $[n,k]_q$ RS-code is $k/n$ and the minimal distance, with respect to the Hamming metric, is $n-k+1$.

\subsection{Related work} \label{sec:prev-results}
\paragraph{Linear codes against insdel errors.} 
        The basic question of whether there exist good \emph{linear} codes for the insdel model was first addressed in the work of Cheng, Guruswami, Haeupler, and Li \cite{cheng2020efficient}. 
        Specifically, they showed that there are linear codes of rate $R = (1-\delta)/2 - h(\delta)/\log_2(q)$ that can correct from $\delta$ fraction of insdel errors. They also showed an almost matching upper bound which they called the \emph{half-Singleton bound}, given next.
        
	\begin{thm}[Half-Singleton bound: Corollary 5.1 in \cite{cheng2020efficient}]
		Every linear insdel code which is capable of correcting a $\delta$ fraction of deletions has rate at most $(1-\delta)/2 + o(1)$.
	\end{thm}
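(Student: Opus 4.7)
The plan is to translate the insdel-decoding condition into a longest-common-subsequence (LCS) statement and then produce two codewords with a long common subsequence via a dimension count on $\cC \times \cC$.

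First I would recall the standard reformulation: a code $\cC \subseteq \F_q^n$ decodes from $t$ deletions if and only if every pair of distinct codewords $c_1,c_2 \in \cC$ satisfies $\mathrm{LCS}(c_1,c_2) \leq n - t - 1$ (otherwise deletion patterns can be found that reduce $c_1$ and $c_2$ to the same received string). Hence it suffices to show that any $[n,k]_q$ linear code admits two distinct codewords whose LCS is at least $2k - 2$: this forces any correctable $t$ to satisfy $t \leq n - 2k + 1$, so $\delta = t/n \leq 1 - (2k-1)/n$, which rearranges to $k/n \leq (1-\delta)/2 + 1/(2n) = (1-\delta)/2 + o(1)$.

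To produce the required pair I would use a shifted-overlap construction. Set $s = n - 2k + 2$ and consider the linear subspace
\[
V = \{(c_1,c_2) \in \cC \times \cC : (c_1)_i = (c_2)_{i+s} \text{ for } i = 1, \ldots, 2k - 2\}.
\]
Since $V$ is cut out of $\cC \times \cC$ (of dimension $2k$) by $2k - 2$ linear equations, $\dim V \geq 2$. Any pair $(c_1,c_2) \in V$ with $c_1 \neq c_2$ immediately witnesses $\mathrm{LCS}(c_1,c_2) \geq 2k - 2$, because the positions $1,2,\ldots,2k-2$ of $c_1$ match the positions $s+1,s+2,\ldots,n$ of $c_2$, and both index sequences are strictly increasing.

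The main obstacle is ruling out the degenerate possibility that every element of $V$ lies on the diagonal $\Delta = \{(c,c) : c \in \cC\}$; such a containment would force $\cC$ to contain a $2$-dimensional subspace of ``$s$-periodic'' codewords satisfying $c_i = c_{i+s}$ for all $i = 1, \ldots, 2k - 2$. Generically this periodic subspace has dimension at most $\max(0,\, k - (2k-2)) = 0$ for $k \geq 2$, strictly smaller than $\dim V \geq 2$, so a non-diagonal pair always exists. In the edge case where the $2k - 2$ defining equations of the periodic subspace happen to be linearly dependent, I would either perturb the shift $s$ and rerun the count, or directly exhibit the pair by aligning two linearly independent periodic codewords via their shared prefix-equals-suffix structure; both routes yield two distinct codewords with LCS at least $2k - 2$ and close the argument.
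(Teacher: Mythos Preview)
The paper does not give its own proof of this theorem; it is quoted as Corollary~5.1 of \cite{cheng2020efficient} and only the non-asymptotic restatement is recorded in Remark~\ref{main-remark}. So there is nothing in the paper to compare against, and your proposal has to stand on its own.

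Your overall strategy (LCS reformulation, then a dimension count on $\cC\times\cC$ cut by $2k-2$ shifted-equality constraints) is the standard one and the arithmetic leading to $k/n \le (1-\delta)/2 + 1/(2n)$ is correct. The genuine gap is exactly the ``edge case'' you flag: the possibility $V\subseteq\Delta$ is real, and your two proposed escapes are not proofs. Concretely, take $n=4$, $k=2$, and $\cC=\mathrm{span}\{(1,0,1,0),(0,1,0,1)\}$. Here $s=2$, every codeword has the form $(a,b,a,b)$, and the constraints $(c_1)_1=(c_2)_3$, $(c_1)_2=(c_2)_4$ force $c_1=c_2$; thus $V=\Delta$ on the nose. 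The bound still holds for this code (e.g.\ $(1,0,1,0)$ and $(0,1,0,1)$ share the length-$3$ subsequence $(0,1,0)$), but your construction does not find it.

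Neither fix, as stated, closes the gap. ``Perturb the shift'' in the direction $s'>s$ drops the number of matched positions below $2k-2$, so even a non-diagonal pair in $V_{s'}$ only certifies $\mathrm{LCS}\ge n-s'<2k-2$; in the direction $s'<s$ the dimension bound $\dim V_{s'}\ge 2k-(n-s')$ falls below $2$, so you no longer know a non-diagonal element exists (in the example above $s'=1$ happens to work, but only because $\dim V_{1}=2$ exceeds the guaranteed lower bound of $1$). The second fix is vaguer still: having two linearly independent $s$-periodic codewords $c,c'$ gives each of them a prefix-equals-suffix structure \emph{individually}, but says nothing that aligns $c$ with $c'$. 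To make the argument complete you need an honest case analysis here (for instance, first handling the case where some nonzero codeword has at least $2k-2$ zeros via comparison with $\mathbf{0}$, and then showing that otherwise the shifted system cannot collapse entirely onto the diagonal); as written, the proposal asserts the conclusion of that analysis without carrying it out.
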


    \begin{remark}
        \label{main-remark}
        The following  non-asymptotic version of the Half-Singleton bound can be derived from the proof of Corollary 5.1 in \cite{cheng2020efficient}:  
        An $[n,k]$ linear code can correct at most  $n-2k+1$ insdel errors. 
    \end{remark}
        
        Cheng et al. \cite{cheng2020efficient} also constructed the first asymptotically good binary linear codes for insdel errors. Their codes have rate $R< 2^{-80}$ and can correct efficiently from $\delta < 1/400$ insdel errors. Then, Con et al. \cite{con2022explicit} constructed linear codes with better rate-distance tradeoffs, and Cheng et al. \cite{cheng2023linear} constructed asymptotically good codes in the high-noise and high-rate regimes, covering regimes of rate-distance that are not achievable by the codes of \cite{con2022explicit}. We note that the open question of constructing codes over \emph{small alphabets} that achieve the half-Singleton bound is still open. 

\paragraph{RS-codes against insdel errors.}
	To the best of our knowledge,  Safavi-Naini and Wang \cite{safavi2002traitor} were the first to study the performance of RS-codes against insdel errors. They gave an algebraic condition that is sufficient for an RS-code to correct from insdel errors, yet they did not provide any construction. 
 
	Wang,  McAven, and  Safavi-Naini  \cite{wang2004deletion} constructed a $[5,2]$ RS-code capable of correcting a single deletion. Then, in \cite{tonien2007construction}, Tonien and  Safavi-Naini constructed an $[n,k]$ generalized-RS-codes capable of correcting from $\log_{k+1} n - 1$ insdel errors. Con, Shpilka, and Tamo \cite{con2023reed} showed the existence of $[n,k]_q$ RS-codes that can correct from $n-2k+1$ where $q=n^{O(k)}$. These codes are the first linear codes that achieve the half-Singleton bound. They also provided deterministic construction of such code over a field of size $n^{k^{O(k)}}$.
	
 Much attention was given to the specific case of $2$-dimensional RS-codes. Specifically, the goal is to construct $[n,2]_q$ RS-codes correcting $n-3$ insdel errors with the smallest $q$ possible. 
 By Remark \ref{main-remark}, it should be noted that these codes are optimal for correcting insdel errors, as they correct the maximum possible number of such errors. 
 Numerous constructions were published \cite{duc2019explicit,chen2021improved,con2023reed,liu2022bounds} (see table \Cref{tab:prev-results-table}). The current best construction is due to \cite{con2023reed} that presented an $[n,2]_q$ RS-code correcting $n-3$ insdel errors where $q=O(n^4)$. They also proved that the field size of an $[n,2]_q$ RS-code correcting $n-3$ deletions must be $q=\Omega (n^3)$. This work closes the gap and shows that $q=\Theta(n^3)$.  

\begin{table}
    \begin{center}
    \begin{tabular}{ |c|c| } 
     \hline
     Work & Field size  \\ 
     \hline
     \cite{duc2019explicit} & $\exp(n)$ \\
     \hline
     \cite{liu2022bounds} & $O(n^5)$ \\
     \hline
     \cite{con2023reed} & $O(n^4)$ \\
     \hline
      This work &  $O(n^3)$ \\ 
     \hline
    \end{tabular}
    \end{center}
    \caption{Previous $[n,2]_q$ RS-code cosntructions that can correct from $n-3$ insdel errors and their field size. We note that the construction presented in \cite{con2023reed} works only for fields of characteristic $3$ whereas in this work, we provide constructions for every positive characteristics.}
    \label{tab:prev-results-table}
\end{table}
\subsection{Our results}
    Our main result is the following theorem.
	\begin{restatable}{thm}{rsTwoDimConst}
		\label{thm:rs-twodim-const}
		For any $n\geq 3$, there exists an explicit $[n,2]_{q}$ RS-code that can correct from $n - 3$  insdel errors, where $q = O(n^3)$. 
	\end{restatable}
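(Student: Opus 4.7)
The plan is to reformulate correcting $n-3$ insdel errors as a combinatorial condition on the evaluation points and then to exhibit an explicit evaluation sequence over a field of size $O(n^3)$ meeting it. A codeword of an $[n,2]_q$ RS code is the evaluation vector of an affine polynomial $f(x)=a+bx$, so the requirement that every pair of distinct codewords has longest common subsequence at most $2$ (equivalently, that the code corrects $n-3$ insdel errors) reduces, after handling the easy subcases in which $f$ or $g$ is constant and yields at most one match, to the following condition on the evaluation vector $(\alpha_1,\ldots,\alpha_n)$: for every $(\lambda,\mu)\in\F_q^*\times\F_q$ with $(\lambda,\mu)\neq(1,0)$, there do not exist index triples $i_1<i_2<i_3$ and $j_1<j_2<j_3$ with $\alpha_{j_l}=\lambda\alpha_{i_l}+\mu$ for $l=1,2,3$. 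Eliminating $\lambda$ and $\mu$ from the three equations yields the cross-ratio identity
\[
(\alpha_{j_2}-\alpha_{j_1})(\alpha_{i_3}-\alpha_{i_1})=(\alpha_{j_3}-\alpha_{j_1})(\alpha_{i_2}-\alpha_{i_1}),
\]
whose only solution with $(\lambda,\mu)=(1,0)$ is the diagonal sextuple $j_l=i_l$. The design task is thus to choose distinct $\alpha_1,\ldots,\alpha_n\in\F_q$ admitting no \emph{off-diagonal} index sextuple satisfying the cross-ratio identity.

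For the construction I would work over a prime field $\F_p$ with $p=\Theta(n^3)$ and take $\alpha_i$ of a structured form that ties the field element to the index in a controlled way. A natural candidate is a two-layer ansatz $\alpha_i=u_i+N v_i\pmod p$ with a medium-sized modulus $N=\Theta(n^2)$ and carefully chosen integer sequences $(u_i)\subseteq[N]$, $(v_i)\subseteq[n]$; alternatives include evaluating a low-degree rational map at $n$ specific points, or taking $\alpha_i=g^{\sigma(i)}$ for a suitable multiplicative generator $g\in\F_p^*$ and permutation $\sigma$. In each of these cases, plugging the ansatz into the cross-ratio identity produces a polynomial identity in the auxiliary parameters, and the goal is to design the parameters so that this identity has no off-diagonal solutions.

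The main obstacle is the verification step. A brute union bound over the $O(n^6)$ index sextuples against the cross-ratio identity (which excludes roughly a $1/q$ fraction of sequences per sextuple) requires $q=\Omega(n^6)$ and is hopelessly loose. The argument must therefore exploit the chosen algebraic structure to collapse many sextuples into the same constraint and to reduce the verification to a small number of genuinely independent equations. Concretely, for a two-layer ansatz I would split the analysis by whether the induced ``high-order'' ratio $\lambda$ corresponds to the trivial layer (in which case the identity degenerates to a Sidon-type condition on $(u_i)$) or to a non-trivial layer (which should force an over-determined system on $(v_i)$ that is ruled out by design). Making these reductions match the lower bound $q=\Omega(n^3)$ from \cite[Proposition 37]{con2023reed} on the nose is the delicate step, and doing so uniformly across all positive characteristics (as advertised in the introduction) would establish \Cref{thm:rs-twodim-const}.
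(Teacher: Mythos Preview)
Your reformulation is correct and matches the paper's algebraic criterion: the determinant condition in \Cref{prop:cond-for-RS} for $k=2$ is exactly your cross-ratio identity on index sextuples. However, what follows is a research plan, not a proof. You propose several ansatz families (two-layer $u_i+Nv_i$, rational maps, $g^{\sigma(i)}$) and defer the entire difficulty to an unspecified ``verification step,'' which you yourself describe as ``the delicate step.'' No concrete choice of $(u_i),(v_i),N,\sigma$ is made, and no argument is given that any of them avoids off-diagonal solutions.

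The missing idea is the mechanism that turns \emph{one} cross-ratio equation into \emph{three} independent equations over a smaller field. The paper does not work over a prime field of size $\Theta(n^3)$; it works over a cubic extension $\F_{q^3}$ with $q=\Theta(n)$, taking evaluation points of the form $\alpha=\delta+\delta^{-1}\gamma$ (or $\delta+\delta^{2}\gamma$) with $\delta\in\F_q^{*}$ and $\gamma$ of degree $3$ over $\F_q$. Because $1,\gamma,\gamma^2$ are $\F_q$-linearly independent, the single cross-ratio identity, which is quadratic in the $\alpha$'s and hence a polynomial of degree $\le 2$ in $\gamma$, splits into three polynomial equations $p_0(\bm\delta)=p_1(\bm\delta)=p_2(\bm\delta)=0$ over $\F_q$. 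This is exactly the leverage needed to beat the naive $n^6$ union bound: three equations in the six $\delta$'s, together with the distinctness constraints, force the diagonal solution by an elementary elimination argument. Your two-layer ansatz $u_i+Nv_i$ over $\F_p$ is only a ``degree-$2$'' substitute (two digits, hence at most two carry-free equations), and even that only if $N$ is large enough to prevent wraparound; it does not supply the third equation, and the alternatives you list (rational maps, multiplicative reparametrisations) are not shown to supply it either. Without that structural device, there is no route from your setup to $q=O(n^3)$.
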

	
    As described earlier, an explicit construction of an RS-code amounts to specifying an evaluation vector. In our constructions, we define sets of size $n$ and prove that any ordering of these $n$ points into an evaluation vector defines a two-dimensional RS-code that can correct from $n-3$. 
    \begin{remark}
        The specific field size obtained in \Cref{thm:rs-twodim-const} is $q = (n+1)^3$. 
        The lower bound, given in \cite[Proposition 37]{con2023reed}, states that when $k=2$, then, $q = \Omega (n^3)$. However, by performing a bit more careful analysis, we get that the lower bound on the field size of an $[n,2]$ RS code correcting $n-3$ is $q \geq \binom{n}{3} - 1$. Therefore, the upper bound obtained in this paper, is tight up to a small constant factor.
    \end{remark}

    \begin{remark}
        We note that there is a similarity between our constructions and those of \emph{$3$-order} Maximum Distance Separable (MDS) codes over small fields presented in \cite[Section 5]{brakensiek2022improved}. Specifically, the evaluation points used in \Cref{sec:second-const} are almost identical to those in their construction. 
        However, the algebraic condition ensuring that these points are good for correcting deletions  differs from the one used in their proof. 
      In particular, while their proof relies on computer assistance to compute a Groebner basis, we can obtain our result directly without such assistance.
    \end{remark}

\subsection{An algebraic condition}	
	In this section, we recall the algebraic condition presented in \cite{con2023reed}. 
	We first make the following definitions:
		We say that a vector of indices $I\in [n]^s$ is an \emph{increasing} vector if its coordinates are monotonically increasing, i.e., for any  $1\leq i<j\leq s$, $I_i<I_j$, where $I_i$ is the $i$th coordinate of $I$.  
	 For two vectors $I,J\in [n]^{2k-1}$  with distinct coordinates we define the following (variant of a) vandermonde matrix of order $(2k-1)\times (2k-1)$ in the formal variables $\bX=(X_1,\ldots,X_n)$:

    \begin{equation} \label{eq:mat-lcs-eq}
	V_{I,J}(\bX)=\begin{pmatrix} 
	1 & X_{I_1} & \ldots & X_{I_1}^{k-1}  & X_{J_1} &\ldots & X_{J_1}^{k-1} \\ 
	1 & X_{I_2} & \ldots & X_{I_2}^{k-1}  & X_{J_2} &\ldots & X_{J_2}^{k-1} \\
	\vdots &\vdots & \ldots &\vdots &\vdots &\ldots &\vdots \\
	1 & X_{I_{2k-1}} & \ldots & X_{I_{2k-1}}^{k-1}  & X_{J_{2k-1}} &\ldots & X_{J_{2k-1}}^{k-1}\\
	\end{pmatrix} .
	\end{equation}

 \begin{prop} \cite[Proposition 2.1]{con2023reed} \label{prop:cond-for-RS}
		Consider the $[n,k]_q$ RS-code defined by an evaluation vector   $\bm{\alpha}=(\alpha_1,\ldots,\alpha_n)$.  
		If for every two increasing vectors $I,J\in [n]^{2k-1}$ that agree on at most $k-1$ coordinates, it holds that $\det(V_{I,J}(\bm{\alpha})) \neq 0$, then the code can correct any $n-2k+1$ insdel errors.
		Moreover, if the code can correct  any $n-2k+1$ insdel errors, then the only possible vectors in $\text{Kernel}\left(V_{I,J}(\bm{\alpha})\right)$ are of the form $(0,f_1,\ldots,f_{k-1},-f_1,\ldots,-f_{k-1})$.
	\end{prop}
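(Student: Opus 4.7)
The plan is to translate the insertion-deletion correction property into the kernel condition via the classical characterization: a code $\cC\subseteq \F^n$ corrects $t$ insdel errors if and only if every pair of distinct codewords has longest common subsequence of length at most $n-t-1$. For $t=n-2k+1$, this says no two distinct codewords can share a common subsequence of length $2k-1$.

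Next I would unpack the algebraic content of a length-$(2k-1)$ common subsequence between $c_f$ and $c_g$: it corresponds exactly to a pair of increasing index vectors $I,J\in[n]^{2k-1}$ satisfying $f(\alpha_{I_i})=g(\alpha_{J_i})$ for all $i$. Expanding $f(x)=\sum_{j=0}^{k-1} f_j x^j$ and $g(x)=\sum_{j=0}^{k-1} g_j x^j$ rewrites these $2k-1$ scalar equations as the single matrix equation $V_{I,J}(\bm{\alpha})\cdot v=0$ with
\[
v=(f_0-g_0,\,f_1,\ldots,f_{k-1},\,-g_1,\ldots,-g_{k-1})^\top,
\]
and crucially $v=0$ iff $f=g$. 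I would also record that whenever $I_i=J_i$ the polynomials agree at $\alpha_{I_i}$; since two distinct polynomials of degree less than $k$ can agree on at most $k-1$ points, the witnesses $(I,J)$ coming from distinct codewords automatically agree on at most $k-1$ coordinates, which is precisely the class on which the hypothesis requires nonvanishing determinant.

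The forward direction is then immediate: if $\det(V_{I,J}(\bm{\alpha}))\neq 0$ for every such $(I,J)$, then $v=0$ is forced, so $f=g$, so no two distinct codewords share a subsequence of length $2k-1$, and the code corrects $n-2k+1$ insdel errors.

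For the moreover part I would run the dictionary in reverse. Given that the code corrects $n-2k+1$ insdel errors and a kernel vector $v=(a_0,a_1,\ldots,a_{k-1},b_1,\ldots,b_{k-1})$ of $V_{I,J}(\bm{\alpha})$, I would produce polynomials by setting $f_0=a_0$, $g_0=0$, $f_j=a_j$ and $g_j=-b_j$ for $j\geq 1$; these satisfy $f(\alpha_{I_i})=g(\alpha_{J_i})$ for every $i$, so the correction hypothesis forces $f=g$, whence $a_0=0$ and $b_j=-a_j$ for $j\geq 1$, giving exactly the asserted form $(0,f_1,\ldots,f_{k-1},-f_1,\ldots,-f_{k-1})$. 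The only real subtlety in the whole argument is keeping the signs and the $f_0-g_0$ constant-term bookkeeping consistent across the dictionary between kernel vectors and polynomial pairs; once that is pinned down the statement is simply the LCS characterization of insdel correction combined with elementary linear algebra.
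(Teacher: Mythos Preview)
The paper does not prove this proposition; it is quoted verbatim from \cite{con2023reed} and used as a black box. So there is no ``paper's own proof'' to compare against.

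Your argument is the standard one and is correct, with one small imprecision: you write ``$v=0$ iff $f=g$,'' but in fact $v=0$ forces $f=g$ to be a common \emph{constant}, while conversely $f=g$ nonconstant gives $v=(0,f_1,\ldots,f_{k-1},-f_1,\ldots,-f_{k-1})\neq 0$. This does not damage the forward direction, since all you need there is $v=0\Rightarrow f=g$; and it is exactly what makes the ``moreover'' clause nontrivial (those nonzero kernel vectors are precisely the ones the statement allows). Otherwise the bookkeeping --- the LCS characterization, the observation that $I_i=J_i$ forces a common root of $f-g$ so at most $k-1$ agreements can occur when $f\neq g$, and the reverse dictionary producing $f,g$ from an arbitrary kernel vector --- is all in order.
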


\section{Explicit construction for $k=2$ with cubic field size}

	In this section, we prove  \Cref{thm:rs-twodim-const}, which is restated for convenience.
	\rsTwoDimConst*
    The proof of \Cref{thm:rs-twodim-const} will follow from two code constructions of an $[n,2]$ RS-code that can correct from $n-3$ insdel errors, over a field of size $O(n^3)$. The first code construction works for any characteristic, whereas the second construction works only for fields of characteristic not equal to $2$. However, for these characteristics, given the same field size, the latter construction provides a slightly longer code compared to the first construction.

    \subsection{Construction for any characteristic} 
    In this section, we shall present the first construction that works for any finite field. Both the construction and its proof are given in the following proposition. 
    
	\label{RS-k-2-char-2}

    \begin{prop} \label{prop:k-2-explicit-char-2}
    Let $\mathcal{A} \subseteq \mathbb{F}_{q}^*$ be a set of size $n$ such that for any two distinct elements $\delta, \delta' \in \mathcal{A}$,  $\delta \neq  -\delta'$, and let $\gamma$ be a root of a degree $3$ irreducible polynomial over $\Fq$.
    Let the vector $\bm{\alpha} = (\alpha_1, \alpha_2, \ldots, \alpha_{n})$ be some ordering of the $n$ elements $\delta + \delta^{-1} \cdot \gamma, \delta \in \mathcal{A}$. Then, the  $\left[n, 2\right]$ RS-code defined over $\mathbb{F}_{q^3}$ with the evaluation vector $\bm{\alpha}$ can correct any $n-3$ insdel errors.
    

	\end{prop}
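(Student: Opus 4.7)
The plan is to verify the algebraic condition of \Cref{prop:cond-for-RS} for $k=2$: for every pair of increasing vectors $I,J\in[n]^3$ that agree on at most one coordinate, the determinant of the $3\times 3$ matrix $V_{I,J}(\bm{\alpha})$ must be shown to be nonzero. Writing $a_j,b_j\in\mathcal{A}$ so that $\alpha_{I_j}=a_j+a_j^{-1}\gamma$ and $\alpha_{J_j}=b_j+b_j^{-1}\gamma$, I will first use elementary column operations to bring the determinant to the form
\[
\det V_{I,J}(\bm{\alpha}) = (\alpha_{I_2}-\alpha_{I_1})(\alpha_{J_3}-\alpha_{J_1}) - (\alpha_{I_3}-\alpha_{I_1})(\alpha_{J_2}-\alpha_{J_1}),
\]
and then apply the identity $\alpha_{I_j}-\alpha_{I_i}=(a_j-a_i)\bigl(1-\gamma/(a_ia_j)\bigr)$ (and its $J$-analogue) to make the dependence on $\gamma$ explicit.

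Expanding, the determinant will take the form $c_0+c_1\gamma+c_2\gamma^2$ with $c_0,c_1,c_2\in\mathbb{F}_q$, since each $\alpha_{I_j},\alpha_{J_j}$ lies in $\mathbb{F}_q+\mathbb{F}_q\gamma$. Because $\gamma$ has an irreducible cubic minimal polynomial over $\mathbb{F}_q$, the powers $1,\gamma,\gamma^2$ are $\mathbb{F}_q$-linearly independent, so the determinant vanishes iff $c_0=c_1=c_2=0$. I will identify $c_0$ as the standard collinearity $3\times 3$ determinant of the points $(a_j,b_j)\in\mathbb{F}_q^2$, and (after clearing the common factor $a_1a_2a_3b_1b_2b_3$) identify $c_2$ as the analogous collinearity determinant for the inverted points $(a_j^{-1},b_j^{-1})$. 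Thus the conditions $c_0=c_2=0$ demand that both triples be collinear.

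The geometric heart of the argument is to deduce from this double collinearity that $b_j=s\cdot a_j$ for a common scalar $s\in\mathbb{F}_q^*$. Recasting the inverted collinearity, the three distinct points $(a_j,b_j)$ lie on a degenerate conic of the form $\nu'xy+\mu'x+\lambda'y=0$; since a line and a (possibly reducible) conic can share at most two points unless the line is a component of the conic, a short case analysis on the possible components, using that the $a_j$'s and the $b_j$'s are distinct and all nonzero, rules out every possibility except that the line through the $(a_j,b_j)$'s passes through the origin. This gives $b_j=s\cdot a_j$ for all $j$. Substituting into $c_1$ and simplifying will yield
\[
c_1=-\frac{(a_2-a_1)(a_3-a_1)(a_3-a_2)}{a_1a_2a_3}\cdot\frac{s^2-1}{s},
\]
so $c_1=0$ forces $s^2=1$.

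The conclusion is then short. If $s=1$, then $b_j=a_j$ for all $j$, so $I=J$, contradicting the hypothesis that $I$ and $J$ agree on at most one coordinate. If $s=-1$ and the characteristic is odd, then for each $j$ the elements $a_j$ and $b_j=-a_j$ are distinct elements of $\mathcal{A}$ that are negatives of one another, directly contradicting the defining property of $\mathcal{A}$; in characteristic $2$ the two cases $s=\pm 1$ coincide and are handled identically. The main obstacle I anticipate will be the geometric step: correctly carrying out the case analysis for the degenerate conic and verifying that every branch other than ``line through the origin'' forces some $a_j$ or $b_j$ to be zero or to coincide with another, contradicting our standing hypotheses. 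Once this is in place, the rest is bookkeeping.
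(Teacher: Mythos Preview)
Your argument is correct. The setup is the same as the paper's: expand the $3\times 3$ determinant as $c_0+c_1\gamma+c_2\gamma^2$ with $c_i\in\Fq$ and use the $\Fq$-linear independence of $1,\gamma,\gamma^2$ to force $c_0=c_1=c_2=0$. From there, however, you take a genuinely different route. The paper manipulates the three equations directly: from $p_0=0$ and $p_2=0$ it extracts $\delta_1\delta_6=\delta_3\delta_4$, combines this with $p_1=0$ to obtain $(\delta_1-\delta_6)(\delta_2-\delta_5)=(\delta_3-\delta_4)(\delta_2-\delta_5)$, and then splits into the cases $\delta_2=\delta_5$ and $\delta_2\neq\delta_5$, each time reducing to a quadratic whose root set forces too many coincidences among the $\delta_i$. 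Your approach instead reads $c_0=0$ and $c_2=0$ geometrically as collinearity of the triples $(a_j,b_j)$ and $(a_j^{-1},b_j^{-1})$; the second collinearity places the $(a_j,b_j)$ on the conic $\lambda xy+\mu y+\nu x=0$, and substituting the (non-horizontal, non-vertical) line $y=mx+c$ into this conic yields a quadratic in $x$ with three distinct roots, hence identically zero, which forces $\lambda=0$ and $c=0$. This gives the clean proportionality $b_j=s\,a_j$, after which your computation of $c_1$ is exactly right and pins down $s^2=1$. The paper's argument is shorter and entirely elementary; yours is slightly longer but more structural, making transparent why the choice $\delta\mapsto \delta+\delta^{-1}\gamma$ works --- the map $\delta\mapsto(\delta,\delta^{-1})$ is designed so that the two collinearity constraints are related by the standard Cremona inversion $(x,y)\mapsto(x^{-1},y^{-1})$, and a generic line does not survive that symmetry. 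One small remark: your ``case analysis on components'' can be replaced by the one-line substitution argument just sketched, which avoids any appeal to B\'ezout-type statements over finite fields.
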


    \begin{remark}
    \label{remarkkk}
        Note that if the characteristic of the field $\Fq$ does not equal to $2$, then, necessarily, the length of the code, $n$, is at most $ (q-1)/2$.  As for each $\delta\in \mathcal{A}, -\delta\notin \mathcal{A}.$
        On the other hand, if the characteristic equals to $2$, then it can be as large as $ q-1$.  
        However, in both cases, since the code is defined over $\mathbb{F}_{q^3}$, we have that the field size is $O(n^3)$.
    %
    \end{remark}

    \begin{proof}
        Assume towards a contradiction that the claim is false, then by \Cref{prop:cond-for-RS}   there  exist two vectors of distinct evaluation points $(\beta_1,\beta_2,\beta_3),(\beta_4,\beta_5,\beta_6)$, that agree on at most one coordinate, such that,
		\[
		\left|
		\begin{pmatrix}
		1 &\beta_1 &\beta_4 \\ 
		1 &\beta_2 &\beta_5 \\
		1 &\beta_3 &\beta_6 \\
		\end{pmatrix} 
		\right|
		= 0 
		\;.
		\]
		Equivalently, 
		  \[
                (\beta_1 - \beta_2)(\beta_5 - \beta_6) - (\beta_4 - \beta_5)(\beta_2 - \beta_3) = 0 \;.
            \] 
            Write $\beta_i := \delta_i + \delta_i^{-1} \cdot \gamma$ for $i\in[6]$ and observe that the LHS is a polynomial in $\gamma$ of degree less than $3$ over $\Fq$. Namely, 
            \[
            p_0(\bm{ \delta}) + p_1(\bm{ \delta})\cdot \gamma +p_2(\bm{ \delta})\cdot \gamma^2 = 0
            \]
            where,
        \begin{align*}
               p_0(\bm{ \delta}) &= (\delta_1 - \delta_2)(\delta_5 - \delta_6) - (\delta_2 - \delta_3) (\delta_4 - \delta_5)\\
               p_1(\bm{ \delta}) &= (\delta_1 - \delta_2)(\delta_5^{-1} - \delta_6^{-1}) + (\delta_1^{-1} - \delta_2^{-1}) (\delta_5 - \delta_6)\\
               &\quad- (\delta_2^{-1} - \delta_3^{-1})(\delta_4-\delta_5) - (\delta_2 -\delta_3)(\delta_4^{-1}-\delta_5^{-1}) \\
               p_2(\bm{ \delta}) &= (\delta_1^{-1} - \delta_2^{-1})(\delta_5^{-1} - \delta_6^{-1}) - (\delta_2^{-1} - \delta_3^{-1}) (\delta_4^{-1}-\delta_5^{-1}) \;.
        \end{align*}
        $p_1(\bm{\delta})$ and $p_2(\bm{\delta})$ can be simplified further,
            \begin{align*}
               p_1(\bm{ \delta}) &= (\delta_1 - \delta_2)(\delta_5 - \delta_6) \left( -(\delta_1\delta_2)^{-1} -(\delta_5\delta_6)^{-1}\right) \\
               &\quad - (\delta_2 - \delta_3)(\delta_4 - \delta_5) \left( -(\delta_2 \delta_3)^{-1} - (\delta_4 \delta_5)^{-1} \right) \\
               p_2(\bm{ \delta}) &= (\delta_1\delta_2\delta_5\delta_6)^{-1}(\delta_1 - \delta_2)(\delta_5 - \delta_6) \\ &\quad - (\delta_2\delta_3 \delta_4\delta_5)^{-1}(\delta_2 - \delta_3) (\delta_4 - \delta_5)\;. 
            \end{align*}
            
            Next, since the minimal polynomial of $\gamma$ over $\Fq$ is of degree $3$,  $p_i(\bm{\delta}) = 0$ for  $i=0,1,2$.
         $p_0(\bm{\delta}) = 0$ implies that, 
            \begin{equation} \label{eq:char-2-p-0}
                (\delta_1 - \delta_2)(\delta_5 - \delta_6) = (\delta_2 - \delta_3) (\delta_4 - \delta_5)\neq 0\;,
            \end{equation}
            
            where the inequality follows since the coordinates of each of the vectors $(\delta_1, \delta_2, \delta_3)$ and $(\delta_4,\delta_5, \delta_6)$ are distinct. Substituting \eqref{eq:char-2-p-0}  in the equation $p_2(\bm{\delta}) = 0$ gives,
            \begin{equation} \label{eq:char-2-p-2}
                \delta_1\delta_6 = \delta_3 \delta_4,
            \end{equation}
            and by $p_1(\bm{\delta}) = 0$ and \eqref{eq:char-2-p-2}, 
            \begin{equation} \label{eq:char-2-p-1}
                \delta_1\delta_2 + \delta_5\delta_6 = \delta_2\delta_3 + \delta_4\delta_5 \;.
            \end{equation}
             \eqref{eq:char-2-p-0} and \eqref{eq:char-2-p-2} imply that,
            \[
                \delta_1 \delta_5 + \delta_2 \delta_6 = \delta_2 \delta_4 + \delta_3\delta_5 \;.
            \]
            Subtracting the last  equation from \eqref{eq:char-2-p-1} we get,
            \begin{equation}
            \label{stam-eq}
            (\delta_1 - \delta_6)(\delta_2 - \delta_5) = (\delta_3 - \delta_4) (\delta_2 - \delta_5) \;.
            \end{equation}
            
            If $\delta_2 = \delta_5$ then by 
            \eqref{eq:char-2-p-1} $\delta_1 + \delta_6 = \delta_3 + \delta_4$. Together with  \eqref{eq:char-2-p-2}, we get that both $\{\delta_1,\delta_6\}$ and $\{\delta_3,\delta_4\}$ are solutions to the  quadratic equation,
            \[
            x^2 - (\delta_3 +\delta_4)x + \delta_3 \delta_4 = 0\;.
            \]
            Therefore, 
            $\{\delta_1,\delta_6\}=\{\delta_3,\delta_4\}$, and we arrive at a contradiction to the assumption that two evaluation vectors have distinct coordinates and they agree on at most one coordinate (indeed, if $\delta_1 = \delta_4$ and $\delta_3 = \delta_6$ then $\beta_1 = \beta_4$ and $\beta_3 = \beta_6$).  
Otherwise, $\delta_2 \neq \delta_5$ and \eqref{stam-eq} becomes
            $\delta_1 - \delta_6 = \delta_3 - \delta_4$. Combining it with \eqref{eq:char-2-p-2}, 
            we get that, as before, both $\{\delta_1,-\delta_6\}$ and $\{\delta_3,-\delta_4\}$ are solutions to the  quadratic equation
            $
            x^2 - (\delta_3 -\delta_4)x - \delta_3 \delta_4 = 0
            $,
            and therefore $\{\delta_1,-\delta_6\}=\{\delta_3,-\delta_4\}$. Recall that for $\delta\in \mathcal{A}, -\delta \notin \mathcal{A}$, thus, the only possibility is that $\delta_1 = \delta_3$ and $\delta_4 = \delta_6$ and again we arrived at a contradiction. The result follows. 
    \end{proof}

    \subsection{Improved construction for odd characteristics}
    \label{sec:second-const}
    In this section, we give the second code construction that improves the length of the code when the characteristic of the finite field $\Fq$ does not equal to  $2$ (See Remark \ref{remarkkk}). Specifically, the code's length can be as large as $q-1$. The evaluation points of the constructed RS-code and the proof are very similar to the first construction. 
%
        
	\begin{prop} \label{prop:k-2-explicit}

  	Let $\mathbb{F}_{q}$ be a finite field of characteristic $p>2$ and let $\mathcal{A}\subseteq \Fq^*$ be a subset of size $n$. Let $\gamma$ be a root of a degree $3$ irreducible polynomial over $\Fq$, and let the vector 
    $\bm{\alpha} = (\alpha_1, \alpha_2, \ldots, \alpha_n)$
    be some ordering  of the $n$ elements $\delta + \delta^2 \cdot \gamma, \delta \in \mathcal{A}$. Then, the $\left[n,2\right]$ RS-code defined over $\mathbb{F}_{q^3}$ with the evaluation vector $\bm{\alpha}$ can correct any $n-3$ insdel errors. 
	\end{prop}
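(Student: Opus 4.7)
The plan is to mirror the proof of Proposition~\ref{prop:k-2-explicit-char-2}, invoking the algebraic criterion from Proposition~\ref{prop:cond-for-RS}. Assume for contradiction that the code fails to correct $n-3$ insdel errors: then there exist two triples of evaluation points $(\beta_1, \beta_2, \beta_3)$ and $(\beta_4, \beta_5, \beta_6)$, each with distinct coordinates and agreeing on at most one entry, such that
\[ (\beta_1 - \beta_2)(\beta_5 - \beta_6) \;=\; (\beta_2 - \beta_3)(\beta_4 - \beta_5). \]

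The convenient identity that streamlines this proof relative to the first construction is the factorization
\[ \beta_i - \beta_j \;=\; (\delta_i - \delta_j) + (\delta_i^2 - \delta_j^2)\,\gamma \;=\; (\delta_i - \delta_j)\bigl(1 + (\delta_i + \delta_j)\,\gamma\bigr). \]
After substitution, the equation becomes a polynomial in $\gamma$ of degree at most $2$; since the minimal polynomial of $\gamma$ over $\F_q$ is cubic, each of its three coefficients must vanish. Writing $P = (\delta_1 - \delta_2)(\delta_5 - \delta_6)$ and $Q = (\delta_2 - \delta_3)(\delta_4 - \delta_5)$, the constant coefficient gives $P = Q \neq 0$; the coefficient of $\gamma$ gives $P(\delta_1 + \delta_2 + \delta_5 + \delta_6) = Q(\delta_2 + \delta_3 + \delta_4 + \delta_5)$; and the coefficient of $\gamma^2$ gives $P(\delta_1 + \delta_2)(\delta_5 + \delta_6) = Q(\delta_2 + \delta_3)(\delta_4 + \delta_5)$. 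Dividing the latter two by the nonzero scalar $P = Q$ yields the clean relations
\[ \delta_1 + \delta_6 \;=\; \delta_3 + \delta_4, \qquad (\delta_1 + \delta_2)(\delta_5 + \delta_6) \;=\; (\delta_2 + \delta_3)(\delta_4 + \delta_5). \]

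I would next eliminate $\delta_4$ from the second relation via the first, and, after rearranging, obtain $(\delta_1 - \delta_3)(\delta_2 + \delta_3 - \delta_5 - \delta_6) = 0$. Distinctness of $\beta_1,\beta_2,\beta_3$ rules out $\delta_1 = \delta_3$, so $\delta_2 + \delta_3 = \delta_5 + \delta_6$. Plugging both linear relations back into the identity $P = Q$ and simplifying --- shifting variables by $\delta_3$ to kill constant terms makes the bookkeeping painless --- collapses the expression to $-2(\delta_1 - \delta_3)(\delta_6 - \delta_3) = 0$. This is exactly where the odd-characteristic hypothesis enters: we conclude $\delta_1 = \delta_3$ or $\delta_6 = \delta_3$. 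The former is excluded as above; the latter, combined with $\delta_1 + \delta_6 = \delta_3 + \delta_4$, forces $\delta_1 = \delta_4$, and then $\beta_1 = \beta_4$ and $\beta_3 = \beta_6$ exhibit two coincidences between the triples, contradicting the ``agree on at most one coordinate'' hypothesis.

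The main conceptual step is the factorization $\beta_i - \beta_j = (\delta_i - \delta_j)(1 + (\delta_i + \delta_j)\gamma)$, which cleanly separates the $\gamma$-dependence and feeds directly into the coefficient extraction; everything else is linear/quadratic bookkeeping. I anticipate that the only real friction point is verifying that the final simplification produces precisely the factor $(\delta_1 - \delta_3)(\delta_6 - \delta_3)$ rather than some more entangled expression --- any other outcome would fail to let the proof close through the distinctness and agreement hypotheses.
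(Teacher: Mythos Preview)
Your proof is correct and follows essentially the same approach as the paper: invoke Proposition~\ref{prop:cond-for-RS}, expand the determinant as a degree-$2$ polynomial in $\gamma$, and set the three $\F_q$-coefficients to zero to obtain the relations $P=Q$, $\delta_1+\delta_6=\delta_3+\delta_4$, and $(\delta_1+\delta_2)(\delta_5+\delta_6)=(\delta_2+\delta_3)(\delta_4+\delta_5)$. The only difference is cosmetic bookkeeping in the endgame: the paper sums the first and third relations (using $\charac\neq 2$) to deduce $\delta_2=\delta_5$, then obtains $\delta_1\delta_6=\delta_3\delta_4$ and finishes via Vieta, whereas you substitute the linear relation into the other two and factor directly to $(\delta_1-\delta_3)(\delta_6-\delta_3)=0$; both routes land on $\beta_1=\beta_4$ and $\beta_3=\beta_6$.
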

        \begin{proof}
        As before, assume towards a contradiction that the claim is false, then by  \Cref{prop:cond-for-RS} there  exist two vectors of distinct evaluation points $(\beta_1,\beta_2,\beta_3)$ and $(\beta_4,\beta_5,\beta_6)$, that agree on at most one coordinate, such that 
		\[
		\left|
		\begin{pmatrix}
		1 &\beta_1 &\beta_4 \\ 
		1 &\beta_2 &\beta_5 \\
		1 &\beta_3 &\beta_6 \\
		\end{pmatrix} 
		\right|
		= 0 
		\;.
		\]
		Equivalently, 
		  \[
                (\beta_1 - \beta_2)(\beta_5 - \beta_6) - (\beta_2 - \beta_3) (\beta_4 - \beta_5) = 0 \;.
            \] 
            Write $\beta_i := \delta_i + \delta_i^2 \cdot \gamma$ and observe that the LHS is a polynomial in $\gamma$ of degree less than $3$ over $\Fq$. Namely, 
            $
            p_0(\bm{\delta}) + p_1(\bm{ \delta})\cdot \gamma +p_2(\bm{ \delta})\cdot \gamma^2 = 0
            $,
            where,
            \begin{align*}
               p_0(\bm{ \delta}) &= (\delta_1 - \delta_2)(\delta_5 - \delta_6) -  (\delta_2 - \delta_3) (\delta_4-\delta_5) \\
               p_1(\bm{ \delta}) &= (\delta_1 - \delta_2)(\delta_5^2 - \delta_6^2) + (\delta_1^2 - \delta_2^2) (\delta_5 - \delta_6)\\
               &\quad- (\delta_2^2 - \delta_3^2)(\delta_4-\delta_5) - (\delta_2 - \delta_3)(\delta_4^2-\delta_5^2) \\
               p_2(\bm{ \delta}) &= (\delta_1^2 - \delta_2^2)(\delta_5^2 - \delta_6^2) - (\delta_2^2 - \delta_3^2) (\delta_4^2-\delta_5^2) \;.
            \end{align*}
            
            Next, by the definition of $\gamma$,  $p_i(\bm{\delta}) = 0$ for  $i=0,1,2$.
         $p_0(\bm{\delta}) = 0$ implies that, 
            \begin{equation} \label{eq:p-0}
                (\delta_1 - \delta_2)(\delta_5 - \delta_6) = (\delta_2 - \delta_3) (\delta_4-\delta_5)\neq 0, 
            \end{equation}
            where the inequality follows since the coordinates of each of the vectors $(\delta_1, \delta_2, \delta_3)$ and $(\delta_4,\delta_5, \delta_6)$ are distinct.
            Substituting \eqref{eq:p-0}  in $p_2(\bm{\delta}) = 0$ and $p_1(\bm{\delta})=0$ gives,
            \begin{equation} \label{eq:p-2}
                (\delta_1 + \delta_2)(\delta_5 + \delta_6) = (\delta_2 + \delta_3)(\delta_4 + \delta_5)\;,
            \end{equation}
            and 
            \begin{equation} \label{eq:restriction-1}
            \delta_1 + \delta_6 = \delta_3 + \delta_4 \;,   
            \end{equation} 
            respectively. 
            Summing \eqref{eq:p-0} and \eqref{eq:p-2} and  
            as the characteristic is different than $2$ we get that,
            \[
            \delta_5(\delta_1 - \delta_3) = \delta_2(\delta_4 - \delta_6) \;,
            \]
            which with  \eqref{eq:restriction-1} implies  that $\delta_2 = \delta_5$. Plugging it in \eqref{eq:p-2}, and taking \eqref{eq:restriction-1} into consideration,  we get that, 
            \begin{equation} \label{eq:restriction-2}
            \delta_1 \delta_6 = \delta_3 \delta_4 \;.
            \end{equation}
            
            By \eqref{eq:restriction-1} and \eqref{eq:restriction-2}, we get that
            $\{\delta_1,\delta_6\}$ and $\{\delta_3,\delta_4\}$ are the solutions to the   quadratic equation
            $
            x^2 -(\delta_3 + \delta_4)x + \delta_3 \delta_4 = 0
            $, and therefore $\{\delta_1,\delta_6\}=\{\delta_3,\delta_4\}$, which as before leads to a contradiction. 
    \end{proof}


    \begin{remark}
        We note that the natural generalization of this construction and proof technique for $k>2$ translates into solving a polynomial system of equations with $4k-2$ variables where the degrees of the resulting polynomials can be as high as $2k(k-1)$. It is possible that a careful analysis of these equations for small values of $k$ will lead to improved constructions as compared to the one given in \cite{con2023reed}. However, in order to handle arbitrary values of $k$ we believe that a different approach is required  due to the complexity of the polynomial equations and lack of apparent structure.
        

    
    
    \end{remark}
	\bibliographystyle{alpha}
	\bibliography{RSinsdel}

\end{document}